\documentclass{article}
\usepackage{graphicx} 
\usepackage{fullpage}
\usepackage{times}
\usepackage{amsmath,amsfonts,amsthm,amssymb}
\usepackage{bbm}
\usepackage{algorithm}
\usepackage{algorithmic}
\usepackage{hyperref}
\usepackage{natbib}
\usepackage{nicematrix}

\newcommand{\gap}{\mathrm{Gap}}

\newcommand{\rps}{\mathrm{rps}}

\DeclareMathOperator*{\argmax}{arg\,max}
\DeclareMathOperator*{\argmin}{arg\,min}

\newcommand{\one}{\mathbf{1}}

\newtheorem{theorem}{Theorem} 
\newtheorem{lemma}{Lemma}

\newtheorem{fact}{Fact}

\usepackage[colorinlistoftodos,bordercolor=orange,backgroundcolor=orange!20,linecolor=orange,textsize=scriptsize]{todonotes}

\newcommand{\bvec}[1]{{\overrightarrow{#1}}}

\title{Tie-breaking Agnostic Lower Bound for Fictitious Play}
\author{Yuanhao Wang\\
Princeton University\\
\texttt{yuanhao@princeton.edu}}
\date{}

\begin{document}

\maketitle

\begin{abstract}
Fictitious play (FP) is a natural learning dynamic in two-player zero-sum games. Samuel Karlin conjectured in 1959 that FP converges at a rate of $O(t^{-1/2})$ to Nash equilibrium, where $t$ is the number of steps played. However, Daskalakis and Pan disproved the stronger form of this conjecture in 2014, where \emph{adversarial} tie-breaking is allowed.

This paper disproves Karlin's conjecture in its weaker form. In particular, there exists a $10$-by-$10$ zero-sum matrix game, in which FP converges at a rate of $\Omega(t^{-1/3})$, and no ties occur except for the first step.
\end{abstract}

\section{Introduction}
A two–player zero–sum game can be written as the bilinear saddle–point problem
\begin{equation}
\label{eq:minimax}
    \max_{x\in\Delta_n}\min_{y\in\Delta_m} x^{\top}Ay,
\end{equation}
where $A\in[-1,1]^{n\times m}$ is the payoff matrix and $\Delta_k$ denotes the $k$–dimensional simplex.  Von~Neumann's celebrated minimax theorem \citep{von1928theorie} guarantees the value of~\eqref{eq:minimax} and the existence of a \emph{Nash equilibrium} $(x^{\star},y^{\star})$ satisfying
\[
    x^{\top}Ay^{\star}\;\le\;(x^{\star})^{\top}Ay^{\star}\;\le\;(x^{\star})^{\top}Ay, \qquad \forall\,x\in\Delta_n,\;y\in\Delta_m.
\]
Computing such an equilibrium remains a central problem in game theory, online learning, and optimization.

One of the earliest algorithms for computing a Nash equilibrium is \emph{fictitious play} (FP)~\citep{brown1951iterative}. FP is an iterative procedure, where in every iteration, each player plays the best response to the {empirical} mixture of the opponent's past actions. More specifically, the update rule at round $t$ is~\footnote{Here the initial points are $x_0=\mathbf{0}$, $y_0=\mathbf{0}$. Note that $\Vert x_t\Vert_1=\Vert y_t\Vert_1=t$.}
\begin{align}
\label{eq:fp1}
\begin{cases}
    i_t &\gets \argmax_i (Ay_{t-1})[i]\\
    x_{t} &\gets x_{t-1} + e[i_t]\\
    j_t &\gets \argmin_j (A^\top x_{t-1})[j]\\
    y_{t} &\gets y_{t-1} + e[j_t]
\end{cases}.
\end{align}
The averaged play $(x_t/t,\,y_t/t)$ is guaranteed to converge to Nash equilibrium, but the {rate} of convergence is delicate and, in spite of seventy years of study, still not completely understood.

The classical convergence analysis~\citep{robinson1951iterative,shapiro1958note} implies a convergence rate of $O\bigl(t^{-1/(m+n-2)}\bigr)$. Meanwhile,~\citet{karlin1959mathematical} conjectured the much faster $O\bigl(t^{-1/2}\bigr)$ rate.  A seminal counterexample by \citet{daskalakis2014counter} refuted Karlin's conjecture under \emph{adversarial tie–breaking}; their construction achieves $\Omega\bigl(t^{-1/n}\bigr)$ convergence in the $n\times n$ identity matrix. However, their construction heavily relies on a time-varying adaptive tie-breaking rule. In fact, it has been shown by~\citet{abernethy2021fast} that for diagonal payoff matrices, which include the identity matrix, the convergence rate is upperbounded by $O(t^{-1/2})$ under \emph{lexicographic tie-breaking}. 

Consequently, the following weaker form of Karlin's conjecture has remained open:
\begin{quote}
    \emph{Does fictitious play converge at the $O(t^{-1/2})$ rate when ties are broken lexicographically, or when no ties occur at all?}
\end{quote}

This paper resolves the question negatively. A $10\times 10$ matrix game is constructed in which FP, regardless of the tie-breaking rule, has a duality gap lower bounded by $\Omega(t^{-1/3})$ at the $t$-th iteration.


\subsection{Related work}

\paragraph{Analysis of Fictitious Play} \citet{harris1998rate} gave an alternative proof of FP convergence by analyzing the continuous-time version of fictitious play. However, the proof is asymptotic in nature and does not imply a finite time rate. The fast $O(1/t)$ rate of continuous-time fictitious play also highlights a large gap between the continuous and discrete time algorithms. For general-sum games, FP is not guaranteed to converge to the Nash equilibrium ~\citep{gaunersdorfer1995fictitious}, except for special classes of games, such as $2\times n$ games~\citep{berger2005fictitious}.

\paragraph{Best-response oracle} FP belongs to a class of game-solving algorithms that access the game through best-response oracles. Indeed, as observed by~\citet{gidel2017frank}, fictitious play can be interpreted as an online Frank-Wolfe method with step size $1/(t+1)$. This property made the algorithm amenable to extensions to learning Nash equilibrium in extensive-form games~\citep{heinrich2015fictitious} and with neural networks~\citep{heinrich2016deep}. 


\section{Preliminaries}
\subsection{Notations}
For a pair of mixed strategies $(x,y)$,  the duality gap is defined as
\[
\gap_A(x,y):= \max_{\hat x\in\Delta_n}\hat{x}^\top Ay - \min_{\hat y\in\Delta_m} x^\top A \hat{y}.
\]
The subscript will be omitted when the payoff matrix referred to is clear from context. It is known that $\gap(x,y)\ge 0$ and $\gap(x,y)=0$ if and only if $(x,y)$ is a Nash equilibrium.

For a vector $w$, we use $w[i]$ to refer to its $i$-th entry. We will use $\max\{w\}$ to denote $\max_i w[i]$, and $\argmax\{w\}$ to denote $\argmax_i w[i]$. $t$ and $k$ will be reserved for integer indices. In addition, we define $\bvec{k}:=[k^2;k;1]^\top$ to simplify the presentation of quadratic polynomials.

\subsection{Symmetric games}
Symmetric games are zero-sum matrix games where $A=-A^\top$. In a symmetric game, assuming the same tie-breaking method is applied to both $x$ and $y$ players, one could immediately see that $x_t=y_t$ for all $t$. Therefore, FP can be simplified as (for $t=1,\cdots$):
\begin{align}
\label{eq:fp-sym}
\begin{cases}
    i_t &\gets \arg\max \{Ax_{t-1}\}\\
    x_{t} &\gets x_{t-1} + e[i_t]
\end{cases}.
\end{align}
Equivalently, one can track the change of $U_t:=Ax_{t}$:
\begin{align}
\label{eq:fp-sym-U}
\begin{cases}
    i_t &\gets \arg\max \{U_{t-1}\}\\
    U_{t} &\gets U_{t-1} + A[:,i_t]
\end{cases}.
\end{align}
The construction presented in this paper is a symmetric game. Thus, the simplified updates (\ref{eq:fp-sym}) and (\ref{eq:fp-sym-U}) will always be used. In a symmetric game, the duality gap can be expressed as
\[
\gap_A(x,x) = \max\{Ax\} - \min\{-Ax\} = 2\max\{Ax\}.
\]
This enables us to use (\ref{eq:fp-sym-U}) and track the convergence rate through the growth of $\max\{U_t\}$: if $\max\{U_t\}=\Theta(t^a)$, then $\gap(x_t/t, x_t/t) = \Theta(t^{a-1})$.

An important fact about FP that is made obvious by (\ref{eq:fp-sym-U}) is that $\max\{U_t\}$ grows monotonically.
\begin{fact}
\label{fact:monotone}
$\forall t\ge 0$, $\max\{U_{t+1}\} \ge \max\{U_{t}\}$. In particular, $\max\{U_{t+1}\} > \max\{U_t\}$ only if $i_{t+2}\neq i_{t+1}$.
\end{fact}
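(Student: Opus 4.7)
The plan is to exploit the one structural property of the game that is relevant here, namely antisymmetry. Since $A = -A^\top$, the diagonal of $A$ vanishes: $A[i,i] = 0$ for every $i$. Once this is in hand, both halves of the claim reduce to a one-line computation using the definition of the update (\ref{eq:fp-sym-U}), and there is essentially nothing else to do.

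For the monotonicity part, I would track the single coordinate $i_{t+1}$ across one step. By definition $i_{t+1} \in \argmax\{U_t\}$, so $U_t[i_{t+1}] = \max\{U_t\}$. After the update we have
\[
U_{t+1}[i_{t+1}] \;=\; U_t[i_{t+1}] + A[i_{t+1}, i_{t+1}] \;=\; U_t[i_{t+1}] \;=\; \max\{U_t\},
\]
where the middle equality is antisymmetry. Taking the max over coordinates of $U_{t+1}$ gives $\max\{U_{t+1}\} \ge U_{t+1}[i_{t+1}] = \max\{U_t\}$, which is the first half of the fact.

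For the second half I would argue the contrapositive: if $i_{t+2} = i_{t+1}$, then the index chosen by the selection rule at time $t+2$ still achieves the max of $U_{t+1}$, so $i_{t+1} \in \argmax\{U_{t+1}\}$. By the computation above, $U_{t+1}[i_{t+1}] = \max\{U_t\}$, and therefore $\max\{U_{t+1}\} = U_{t+1}[i_{t+1}] = \max\{U_t\}$, ruling out a strict increase. The only subtlety is that the argmax can be a set if there are ties, but this does not affect the argument: we use only that the selected index achieves the maximum, which holds under any tie-breaking rule. So there is no real obstacle here; the fact is an immediate consequence of the zero diagonal.
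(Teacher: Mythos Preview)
Your proof is correct and essentially identical to the paper's own argument: both track the coordinate $i_{t+1}$ across one update, use antisymmetry to get $A[i_{t+1},i_{t+1}]=0$, and read off monotonicity (and the strict-increase criterion) from the resulting chain $\max\{U_{t+1}\}=U_{t+1}[i_{t+2}]\ge U_{t+1}[i_{t+1}]=\max\{U_t\}$.
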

\begin{proof}
Note that
\begin{align*}
\max\{U_{t+1}\}=U_{t+1}[i_{t+2}] \ge U_{t+1}[i_{t+1}] = U_t[i_{t+1}] + A[i_{t+1},i_{t+1}] = \max\{U_t\}.
\end{align*}
Here $A[i_{t+1},i_{t+1}]=0$ since $A=-A^\top$. 
\end{proof}

\section{The Rock-Paper-Scissors game}
The classic Rock-Paper-Scissors (RPS) game is characterized by the $3\times 3$ payoff matrix
\[
A_{\rps} = \begin{pmatrix}
    0 & -1 & 1\\
    1 & 0 & -1\\
    -1 & 1 & 0
\end{pmatrix}.
\]
If FP (\ref{eq:fp-sym-U}) is run in RPS with \emph{lexicographic tie-breaking} ($1\succ 2\succ 3$), the actions $i_t$ will follow the following pattern:
\[
1,2,2,2,3,3,3,3,3,\cdots,\underbrace{1,\cdots,1}_{6k+1},\underbrace{2,\cdots,2}_{6k+3},\underbrace{3,\cdots,3}_{6k+5},\cdots
\]
This pattern can be formalized as follows.
\begin{fact}
\label{fact:rps}
Let $x_t^{\rps}$ and $U_t^{\rps}$ be FP iterates for RPS under lexicographic tie-breaking. $\forall k\ge 1$,
\begin{align}
x_{9k^2}^{\rps} &= [3k^2-2k, 3k^2, 3k^2 + 2k]^\top, \qquad &U_{9k^2}^{\rps} &= [2k, -4k, 2k]^\top,\label{eq:vertex1}\\
x_{9k^2+6k+1}^{\rps} &= [3k^2+4k+1, 3k^2, 3k^2 + 2k]^\top, &U_{9k^2+6k+1}^{\rps}& = [2k, 2k+1, -4k-1]^\top,\label{eq:vertex2}\\
x_{9k^2+12k+4}^{\rps} &= [3k^2+4k+1, 3k^2+6k+3, 3k^2 + 2k]^\top, &U_{9k^2+12k+4}^{\rps} &= [-4k-3, 2k+1, 2k+2]^\top.\label{eq:vertex3}
\end{align}
\end{fact}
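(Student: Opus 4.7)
The plan is to prove all three identities simultaneously by induction on $k$, with the base case $k=1$ verified by running the recursion~(\ref{eq:fp-sym-U}) for nine steps from $U_0 = \mathbf{0}$, and the inductive step obtained by showing FP cycles through the three "vertex states" described in equations~(\ref{eq:vertex1})--(\ref{eq:vertex3}), returning to the first vertex with $k$ replaced by $k+1$.

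The key is to analyze each of the three "phases" (pure action runs) separately. Playing action $i$ adds column $A_\rps[:,i]$ to $U$, so each phase is a linear path in $\R^3$ and the transition times are easy to compute in closed form. Starting from vertex~\eqref{eq:vertex1}, $U = [2k, -4k, 2k]^\top$ exhibits a tie between actions $1$ and $3$, and lexicographic tie-breaking selects action~$1$. Repeating action~$1$ adds $[0,1,-1]^\top$ per step, so after $s$ steps $U = [2k,\,-4k+s,\,2k-s]^\top$. The first value of $s$ at which action~$1$ is no longer the lex-preferred argmax is $s = 6k+1$: at $s = 6k$ a tie arises between $1$ and $2$ (still resolved as $1$), and at $s = 6k+1$ action $2$ becomes the strict maximum, giving $U = [2k, 2k+1, -4k-1]^\top$, matching~\eqref{eq:vertex2}. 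One then applies the same reasoning to action~$2$, which adds $[-1,0,1]^\top$ and is played for $6k+3$ steps (one tied step between $2$ and $3$ resolved as $2$), producing~\eqref{eq:vertex3}; and finally to action~$3$, which adds $[1,-1,0]^\top$ and is played for $6k+5$ steps, producing exactly $U = [2(k+1), -4(k+1), 2(k+1)]^\top$, i.e.\ equation~\eqref{eq:vertex1} at index $k+1$.

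To finish, I would add up step counts and column contributions to check the $x$-components. The phase lengths sum to $(6k+1) + (6k+3) + (6k+5) = 18k+9 = 9(k+1)^2 - 9k^2$, confirming the iteration indices. Accumulating $(6k+1)\,e_1 + (6k+3)\,e_2 + (6k+5)\,e_3$ onto $x_{9k^2}^\rps$ yields exactly $x_{9(k+1)^2}^\rps$, and the intermediate expressions for $x_{9k^2+6k+1}^\rps$ and $x_{9k^2+12k+4}^\rps$ are obtained by adding only the first phase or the first two phases, respectively.

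The only subtle point, and the one I would most carefully justify, is the behavior at the two "tied" steps inside each phase: whenever the trailing coordinate catches up to the leading coordinate, lexicographic tie-breaking must preserve the currently repeated action for exactly one more step before the next coordinate overtakes it. This is not an obstacle so much as a bookkeeping requirement, and it is the mechanism that produces the asymmetric phase lengths $6k+1$, $6k+3$, $6k+5$ rather than a uniform $6k+2$.
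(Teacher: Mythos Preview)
Your proposal is correct and follows essentially the same three-phase induction as the paper's own proof: both arguments start from the inductive hypothesis at $t=9k^2$, track the linear evolution of $U$ under each repeated column of $A_{\rps}$, and read off the phase lengths $6k+1$, $6k+3$, $6k+5$. One small inaccuracy to clean up: your final paragraph asserts that lexicographic tie-breaking always ``preserves the currently repeated action for exactly one more step,'' but this is only true for phases~1 and~2; in phase~3 the terminal tie is between coordinates $1$ and $3$ and is resolved in favor of $1$, \emph{ending} the phase rather than extending it---your stated length $6k+5$ is nonetheless correct because the last play of action~$3$ is determined by the strict maximum one step earlier.
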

Intuitively, this periodicity shows that in RPS, the average strategy $x_t$ evolves in outward-spinning triangles, and (\ref{eq:vertex1})-(\ref{eq:vertex3}) captures moments the iterate is at one vertex of the triangle. We will be heavily exploit this periodicity. Recall that $\bvec{k}=[k^2;k;1]^\top$. Then the Fact above can be conveniently rewritten in the following way:~\footnote{For the curious reader, $Q$ stands for ``quadratic''.}
\[
\text{For }Q\in\left\{\begin{pmatrix}
3 & -2 & 0 \\
3 & 0 & 0 \\
3 & 2 & 0
\end{pmatrix}, \begin{pmatrix}
3 & 4 & 1 \\
3 & 0 & 0 \\
3 & 2 & 0
\end{pmatrix},\begin{pmatrix}
3 & 4 & 1 \\
3 & 6 & 3 \\
3 & 2 & 0
\end{pmatrix}\right\}, \forall k\ge 1, \text{when } t=\one^\top Q\bvec{k}, x_t^\rps = Q\bvec{k}.
\]
We define \emph{admissible} matrices for RPS as those that satisfy this property, namely for all $k$, when $t=\one^\top Q\bvec{k}$, $x_t=Q\bvec{k}$. By interpolating between the three admissible matrices given in Fact~\ref{fact:rps}, one can obtain many more admissible matrices. In particular, the following two admissible matrices will be used in the hard instance construction:
\begin{equation}
\label{eq:qmat}
    Q_0 = 
\begin{pmatrix}
12 & 8 & -12 \\
12 & 0 & 0 \\
12 & 4 & 0
\end{pmatrix}
\quad\text{and}\quad
Q_1 =
\begin{pmatrix}
48 & 208 & 225 \\
48 & 200 & 213 \\
48 & 200 & 208
\end{pmatrix}.
\end{equation}
\begin{fact}
$Q_0$ and $Q_1$ are admissible for RPS.
\end{fact}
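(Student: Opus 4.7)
The plan is to identify, for each of $Q_0$ and $Q_1$, a cycle level $\ell(k)$, an action regime, and a within-regime offset $j(k)$ whose corresponding moment in the RPS trajectory described by Fact~\ref{fact:rps} reproduces $Q_i\bvec{k}$ at time $\one^\top Q_i\bvec{k}$.

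First I would extend Fact~\ref{fact:rps} from vertex snapshots to a description of the \emph{entire} trajectory. Since within each action regime only one coordinate is incremented, during the action-$1$ regime of cycle $\ell$ the position at time $9\ell^2 + j$ (for $0\le j \le 6\ell+1$) is $(3\ell^2-2\ell+j,\,3\ell^2,\,3\ell^2+2\ell)$, and analogous explicit formulas hold for the action-$2$ and action-$3$ regimes. Consequently, any integer-valued polynomials $\ell(k), j(k)$ satisfying the regime's range constraint yield an admissible matrix whose entries are read off directly from these formulas.

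For $Q_0$ the natural guess is $\ell(k)=2k$ in the action-$1$ regime with $j(k)=12k-12$: this matches $\one^\top Q_0\bvec{k} = 36k^2+12k-12 = 9(2k)^2+(12k-12)$, and the action-$1$ position formula evaluates to $(12k^2+8k-12,\,12k^2,\,12k^2+4k) = Q_0\bvec{k}$. For $Q_1$ the guess is $\ell(k)=4k+8$ in the action-$2$ regime with $j(k)=8k+21$: then $\one^\top Q_1\bvec{k} = 144k^2+608k+646 = 9(4k+8)^2+6(4k+8)+1+(8k+21)$, and the corresponding action-$2$ position $(3\ell^2+4\ell+1,\,3\ell^2+j,\,3\ell^2+2\ell)$ matches $Q_1\bvec{k}$ entry by entry.

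The final step is to verify the range constraints hold for all $k\ge 1$: $0\le 12k-12 \le 12k+1$ for $Q_0$ and $0\le 8k+21 \le 24k+51$ for $Q_1$, both trivial. Each positional equality reduces to matching coefficients of $k^2, k, 1$ in quadratic polynomials in $k$, so once $\ell(k)$ and $j(k)$ have been guessed correctly the remaining verification is entirely mechanical. The only genuinely creative step is guessing the right $\ell(k)$ and $j(k)$; these can be reverse-engineered from the $k^2$-coefficients of the rows of each $Q_i$ (which determine $3\ell(k)^2$ and hence $\ell$), followed by solving for $j(k)$ from the remaining coefficients.
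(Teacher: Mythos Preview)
Your proof is correct and follows exactly the approach the paper sketches in its one-line justification (``apply Fact~\ref{fact:rps} to $2k$ and $4k+8$ respectively and interpolating the vertices''): you identify the same cycle levels $\ell(k)=2k$ and $\ell(k)=4k+8$, determine the relevant action regime, and carry out the interpolation explicitly with the offsets $j(k)$, including the range checks that make the interpolation valid.
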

\noindent This can be seen by applying Fact~\ref{fact:rps} to $2k$ and $4k+8$ respectively and interpolating the vertices. 

A further observation is that if fictitious play is initialized with $U_0 = A_{rps}Q_0\bvec{k}$, then after $t=\one^\top (Q_1-Q_0) \bvec{k}$ steps, $x_t=(Q_1-Q_0)\bvec{k}$. The reason is that the FP with initialization evolves as if FP without initialization from $t_0=\one^\top Q_0\bvec{k}$ to $t_1=\one^\top Q_1\bvec{k}$.

\section{A Hard Instance for Fictitious Play}
\label{sec:hard-instance}
This section gives the full construction of the hard $9\times9$ matrix game that attains the $\tilde\Omega(t^{-1/3})$ lower bound. Building on the periodic structure exhibited in the $3\times 3$ rock-paper-scissors game, we embed three coupled copies into a larger $9\times 9$ game so that FP follows a double-loop trajectory:
\begin{enumerate}
    \item an inner loop in which FP plays one of the RPS instances for $\Theta(t^{2/3})$ steps;
    \item an outer where the ``active'' RPS copy transitions from block~1 to block~2 to block~3, and back to block~1.
\end{enumerate}
The benefit of a double-loop structure is that the three RPS instances will interact with each other in a way that ``cancels'' the progress of each other. This ensures that in every outer loop block, $\max\{U_t\}$ increases by $\sqrt{t^{2/3}} = t^{1/3}$ -- as if the FP was re-initialized every time the RPS instance is played.

This section is organized as follows. Sec~\ref{sec:hard-instance-derivation} will provide the motivation for the sufficient conditions of a counterexample. Sec~\ref{sec:hard-instance-number} presents one solution of the sufficient condition. Sec~\ref{sec:hard-instance-proof} will prove the lower bound.

\subsection{Deriving the instance}
\label{sec:hard-instance-derivation}
The $9\times 9$ game will be a block matrix that contains three RPS instances that interact with each other in a cyclical symmetric way:
\begin{equation}
    \label{eq:Mdef}
    M = \begin{pmatrix}
    A_{\rps} & B & -B\\
    -B & A_{\rps} & B\\
    B & -B & A_{\rps}
\end{pmatrix}.
\end{equation}
To ensure that $M=-M^\top$, the $3\times 3$ interaction matrix $B$ needs to be symmetric. This is the key part of the hard instance which needs to be designed such that FP on $M$ has a double-loop phased structured.

To be more precise, the desired effect is a phased structure shown in Fig.~\ref{fig:phase-structure}. Here $Q$, $V_1$, $V_2$, $V_3$ are constant $3\times 3$ matrices to be determined. Starting from some time $T_k$, in time period $[T_k, T_{k+1})$, all actions played will be in the first block and can be described by $Q\bvec{k}$; in time period $[T_{k+1}, T_{k+2})$, all actions played will fall in the second block; so on and so forth. The evolution of $U_t$ will have a synchronized cyclic structure, so that in time period $[T_k, T_{k+1})$, its maximum always fall in the first block.

\begin{figure}[ht]
    \centering
    \includegraphics[width=0.9\linewidth]{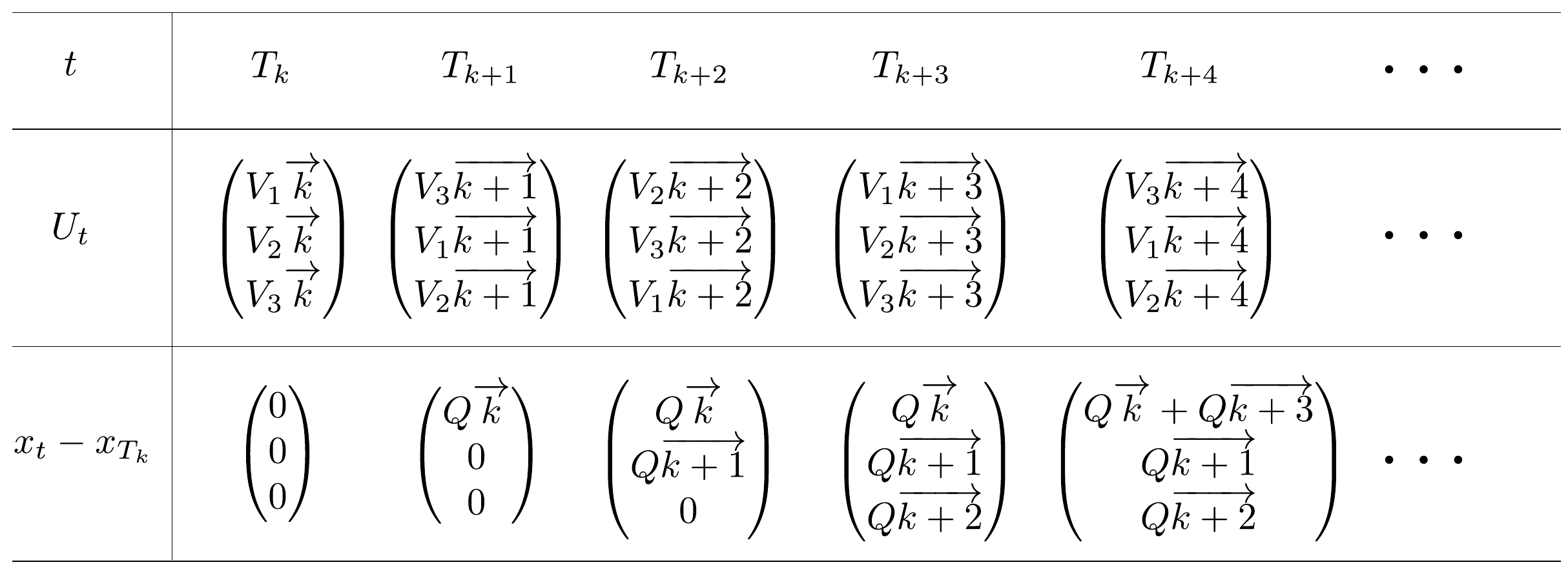}
    \caption{Phased Structure. It is assumed that $k\mod 3=1$ here for illustration.}
    \label{fig:phase-structure}
\end{figure}

It is straightforward to see that if the phased structure in Fig.~\ref{fig:phase-structure} holds, $\max\{U_{T_k}\}$ will be $O(k^2)$ while $T_k$ is $\Theta(k^3)$, which proves the desired $\Omega(t^{-1/3})$ lower bound.

In order for the phased structure to hold, a necessary condition is that for all $k$,
\[
 \begin{pmatrix}
    V_1\bvec{k}\\V_2\bvec{k} \\ V_3\bvec{k}
\end{pmatrix} + \begin{pmatrix}
    A_{\rps} & B & -B\\
    -B & A_{\rps} & B\\
    B & -B & A_{\rps}
\end{pmatrix} \begin{pmatrix}
    Q\bvec{k}\\0 \\0
\end{pmatrix} =  \begin{pmatrix}
    V_3 \bvec{k+1}\\ V_1\bvec{k+1} \\ V_2\bvec{k+1}
\end{pmatrix}.
\]
Using the fact that 
\[
\bvec{k+1} = \begin{pmatrix}
   (k+1)^2 \\ k+1 \\ 1
\end{pmatrix}= \begin{pmatrix}
    1 & 2 & 1\\
    0 & 1 & 1\\
    0 & 0 &1
\end{pmatrix}\bvec{k} =:  C\bvec{k},
\]
the condition above can be rewritten in matrix form:
\begin{align}
\label{eq:blocks}
\begin{cases}
        V_1 + A_{\rps}Q &= V_3 C\\
    V_2 - BQ &= V_1 C\\
    V_3 + BQ &= V_2 C
\end{cases}.
\end{align}

In order to ensure that whenever a block's turn is up, the next $\one^\top Q\bvec{k}$ actions taken will indeed be described by $Q\bvec{k}$, we will choose $Q=Q_1-Q_0$, and choose $V_1$ such that
\[
V_1\bvec{k} = \Delta(k)\one + A_{\rps}Q_0\bvec{k}.
\]
This will ensure that within every phase, the evolution of $U$ within the block -- up to a constant -- is exactly the evolution within RPS from time $\one^\top Q_0\bvec{k}$ $\one^\top Q_1\bvec{k}$. 
Here $\Delta(k)$ is a scalar quadratic function of $k$. The choice of $\Delta$ will be given later and does not affect the evolution within the block since it's a constant for each action.

The equation above can be written in matrix form as 
\begin{equation}
\label{eq:init}
V_1 = \one \Delta^\top + A_{\rps} Q_0. 
\end{equation}
By expressing $V_1$, $V_2$ and $V_3$ with $B$ and $\Delta$, we can combine (\ref{eq:blocks}) and (\ref{eq:init}) into one matrix equation about $A$ and $\Delta$:
\begin{equation}
\label{eq:key}
A_{\rps}Q + BQ(C-C^2) = \one\Delta^\top (C^3-I) + A_{\rps}Q_0(C^3-I).
\end{equation}
This is the key equation that we need to solve.

However, equation (\ref{eq:key}) itself is necessary but not sufficient for the phased structure illustrated in Fig.~\ref{fig:phase-structure}: we also need  the outer loop, \emph{i.e.} the transition between blocks, to be timed exactly. That is, we need to ensure that for any $t\in [T_k, T_{k+1})$,
\[
\max\{U_t[1:3]\} > \max\{ U_t[4:9]\},
\]
so that the first block's evolution is not ``interrupted'' prematurely. The tactic for enforcing this condition is illustrated by Fig.~\ref{fig:block-max}: we will design $B$ to be fully negative, so that when the first block is played, the maximum within the second block increases monotonically while the maximum within the third block decreases monotonically. Then, by requiring
\begin{equation}
\label{eq:key-ineq}
    0<\max\{V_1 \bvec{k}\} - \max\{V_3\bvec{k}\}<-\max\{B\},
\end{equation}
we can ensure that the second block catches up with the first block at just the right time: $\max_t\{U_t[4:6]\}$ will be smaller than $\max_t\{U_t[1:3]\}$ at $t=T_{k+1}-1$, but greater than it at $t=T_{k+1}$.

\begin{figure}[ht]
    \centering
    \includegraphics[width=0.95\linewidth]{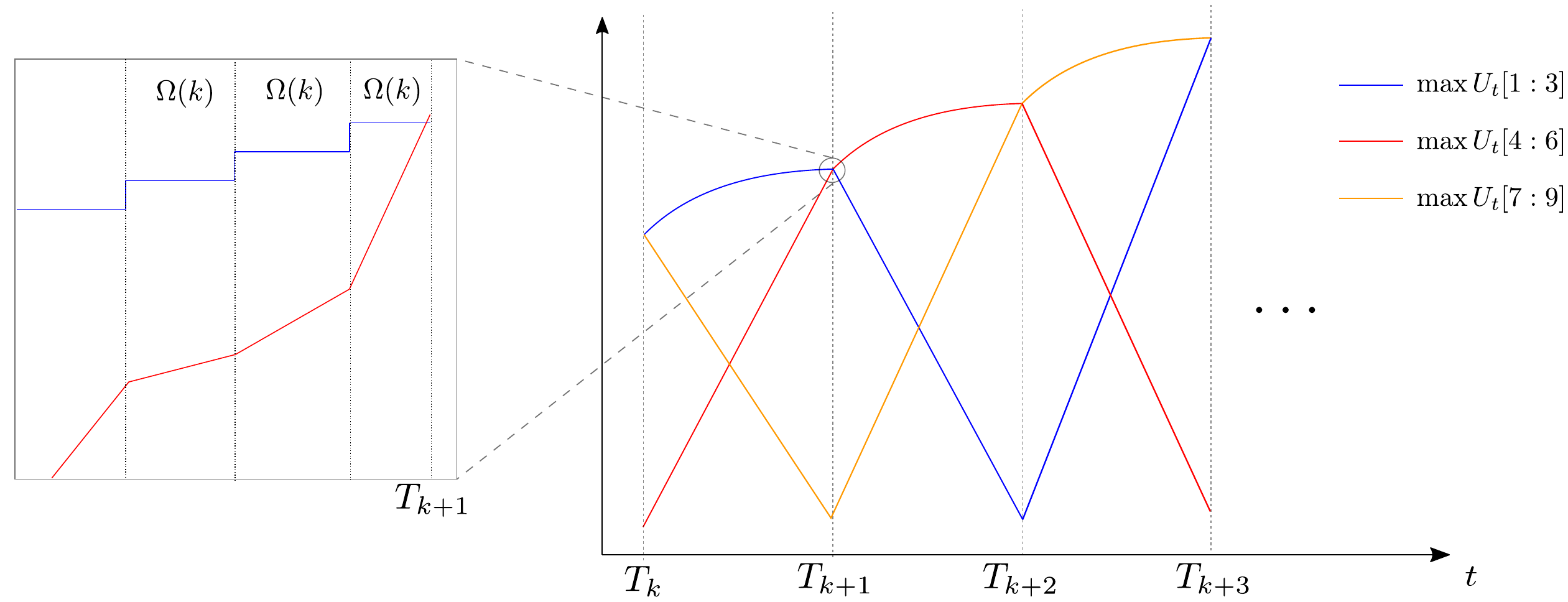}
    \caption{Evolution of the maximum within each block.}
    \label{fig:block-max}
\end{figure}

\subsection{The instance}
\label{sec:hard-instance-number}
At this point, the task for finding a lower bound essentially boils down to: find a pair of admissible matrix $Q_0$, $Q_1$, such that $B$ and $\Delta$ can be solved from (\ref{eq:key}) and satisfy (\ref{eq:key-ineq}). A simple brute-force search is used to identify one such solution, given by admissible matrices defined in (\ref{eq:qmat}), and
\[
B = -\frac{1}{900}\begin{pmatrix}
    71 & 54 & 75\\
    54 & 21 & 25\\
    75 & 25 & 50
\end{pmatrix}, \quad \Delta=\begin{pmatrix}
    2\\
    \frac{290}{27}\\
    0
\end{pmatrix}.
\]
It can be easily checked that $B$ and $\Delta$ satisfy (\ref{eq:key}). The corresponding $V_1$, $V_2$, $V_3$ are found as
\[
V_1 = \frac{1}{27}\begin{pmatrix}
   54 & 406 & 0\\
   54 & 406 & -324\\
   54 & 82 & 324
\end{pmatrix},\quad
V_2 = -\frac{1}{2700}\begin{pmatrix}
    16200 & 67700 & 85787 \\
    5400 & 8300 & 53813 \\
    10800 & 70400 & 54500
\end{pmatrix},\quad
V_3 = \frac{1}{27}\begin{pmatrix}
    54 & 190 & -379\\
    54 & 406 & -1\\
    54 & -26 & -352
\end{pmatrix}.
\]
It can then be checked that $\forall k$,
\[
\max\{V_1 \bvec{k}\} - \max\{V_3\bvec{k}\} = \frac{1}{27} \in (0, -\max\{B\}).
\]
The counterexample could be instantiated by running FP on $M$ (defined in (\ref{eq:Mdef})) with lexicographic tie-breaking and initialization 
\[
U_0 = \begin{pmatrix}
    V_1\bvec{1}\\
    V_2\bvec{1}\\
    V_3\bvec{1}
\end{pmatrix} = \left[\frac{460}{27}, \frac{136}{27}, \frac{460}{27}, -\frac{169687}{2700}, -\frac{67513}{2700}, -\frac{1357}{27}, -5, 17, 12\right]^\top.
\]
These two caveats can be removed by considering a $10\times 10$ game, augmented with a dummy action that will only be played in the first step. In particular, define
\[
\hat{U}_0 = U_0 + \frac{169687}{2700}\one + \left[2\delta,\delta,0,2\delta,\delta,0,2\delta,\delta,0\right]^\top,
\]
where $\delta$ can be chosen as an arbitrary constant in $(0, 1/1800)$. Then a $10\times 10$ augmented game can be defined as:
\begin{equation*}
     {M}_{aug} = \begin{pNiceArray}{c|ccc}
     0 & \Block{1-3}{-\hat{U_0}^\top}\\\hline
\Block{3-1}{\hat{U_0}} & A_{\rps} & B & -B\\
    &-B & A_{\rps} & B\\
    &B & -B & A_{\rps}
\end{pNiceArray}.
\end{equation*}
Since $U_0=\emph{0}$, a tie necessarily occurs, and it is without loss of generality to assume that $i_1=1$. In that case, the first FP step in $M_{aug}$ will result in $U_1 = [0;\hat{U}_0]$. Since
\[
\max\{\hat{U}_0\} > \min\{\hat{U}_0\} > 0, 
\]
the first action will never be chosen by FP again. Therefore, since the second step, the execution of FP on $M_{aug}$ will be identical to FP on $M$ with initial vector $U_0$ and lexicographic tie-breaking.

\begin{lemma}
\label{lem:init}
Denote the sequence of actions that FP takes with lexicographic tie-breaking and initial vector $U_0$ by $\{i_t\}$. Denote the sequence of actions of FP in $M_{aug}$ by $\{\hat{i}_t\}$. Then for all $t\ge 1$, $i_t = \hat{i}_{t+1}$.
\end{lemma}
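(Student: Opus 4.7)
The plan is to formalize the three-sentence argument given just before the lemma, breaking the proof into three stages.

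First, handle the initial step and the one-time use of the dummy action. Since $U^{aug}_0$ is the zero vector in $\mathbb{R}^{10}$, all entries are tied and lexicographic tie-breaking forces $\hat{i}_1 = 1$, giving $U^{aug}_1 = M_{aug}[:,1] = [0;\hat{U}_0]^\top$. I would then induct on $t \geq 2$ to show $\hat{i}_t \neq 1$. The required input is $\hat{U}_0 > 0$ componentwise, which I would verify by direct inspection: the minimum of $U_0 + \tfrac{169687}{2700}\one$ equals $0$ (at position $4$) and is lifted to $2\delta > 0$ by the perturbation. Consequently, each non-dummy action $j \geq 2$ subtracts $\hat{U}_0[j-1] > 0$ from the first coordinate, so $U^{aug}_t[1] < 0$ for $t \geq 2$, while Fact~\ref{fact:monotone} gives $\max\{U^{aug}_t\} \geq \max\{\hat{U}_0\} > 0$; hence coordinate $1$ is never an arg-max thereafter.

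Second, I would reduce the residual dynamics to FP on $M$. Let $\tilde{U}_t := U^{aug}_{t+1}[2:10]$ and let $U^M_t$ denote the FP iterate on $M$ starting from $U_0$ with lexicographic tie-breaking. Because $M_{aug}[2:10,2:10] = M$ identically and only actions in $\{2,\ldots,10\}$ are played from time $2$ onwards, $\tilde{U}_t$ obeys the same FP update rule as $U^M_t$. At $t = 0$, $\tilde{U}_0 - U_0 = \tfrac{169687}{2700}\one + \epsilon$ with $\epsilon := [2\delta,\delta,0,2\delta,\delta,0,2\delta,\delta,0]^\top$, and by induction this difference stays equal to the same constant as long as the two action sequences agree. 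Since the $\one$-shift is arg-max invariant, the lemma reduces to verifying that $\argmax\{U^M_t + \epsilon\}$ coincides with the lexicographic arg-max of $U^M_t$.

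The main obstacle is this arg-max consistency. Two properties must hold along the whole trajectory: (i) $\epsilon$ never flips a strict inequality between two distinct entries of $U^M_t$, and (ii) whenever a tie does occur, $\epsilon$ breaks it compatibly with lex. Property (i) follows from a gap argument: entries of $U^M_t$ live on a fixed rational lattice determined by the denominators of $U_0$ and the entries of $M$, so any nonzero coordinate gap is bounded below by an explicit rational constant, and $\delta$ is chosen so that $\|\epsilon\|_\infty = 2\delta$ sits strictly below it. Property (ii) is subtler, because the pattern $[2\delta,\delta,0,\ldots]^\top$ breaks within-block ties in lex order but compares across blocks by within-block slot rather than by block index; to close this case I would appeal to the explicit phase-structured trajectory developed in Section~\ref{sec:hard-instance-proof} and enumerate the ties which actually arise along it, verifying that each is resolved consistently with lex. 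This lattice-and-enumeration check is the only genuinely non-routine piece; everything else is bookkeeping.
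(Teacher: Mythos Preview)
Your proposal is correct and follows essentially the same three-stage argument as the paper's own proof in the appendix: the dummy action is played exactly once because $\hat U_0>0$ entrywise, the residual dynamics then reduces to FP on $M$ started from $\hat U_0=U_0+\text{const}\cdot\one+\epsilon$, and finally the perturbation $\epsilon$ never alters an argmax along the trajectory. The only presentational difference is in step~(ii): where you propose to enumerate cross-block ties via the phase structure of Section~\ref{sec:hard-instance-proof}, the paper simply invokes that same structure (the strict inequalities of Lemma~\ref{lem:block}) to assert that all ties are within a block, so the $[2\delta,\delta,0]$ pattern resolves them lexicographically and the $1/900$ denominator bound on $M$ handles the remaining strict gaps.
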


\subsection{Lower bound proof}
\label{sec:hard-instance-proof}
We now provide a formal proof that the $9\times 9$ matrix game defined in Sec~\ref{sec:hard-instance-number} indeed shows the desired $\Omega(t^{-1/3})$ lower bound. Define 
\[
T_k = \sum_{j=1}^{k-1} \one^\top Q \bvec{j} = 36k^3 + 244k^2 + 378k - 658.
\]
It follows that $T_{k+1}-T_k = \one^\top Q \bvec{k}$. The following two technical lemmas formalize the argument made in Sect.~\ref{sec:hard-instance-derivation} that (\ref{eq:key-ineq}) is sufficient. Lemma~\ref{lem:block} shows that $\max\{U_t[1:3]\}>\max\{U_t[4:9]\}$ assuming the steps in $[T_k,t)$ are made according to the RPS trajectory. Lemma~\ref{lem:oneblock} then applies induction to show that the first block will follow RPS trajectory throughout the phase $[T_k,T_{k+1}]$.
\begin{lemma}
\label{lem:block}
For any $k\ge 1$, define
\[
t_0=\one^\top Q_0\bvec{k} = 36k^2+12k-12, t_1=\one^\top Q_1\bvec{k}=44k^2+608k+646.
\]
Then for all $\tau\in [0, T_{k+1}-T_k)$,
\begin{align}
   \max\{ V_1\bvec{k} + A_{\rps} \left(x^\rps_{t_0  + \tau} - x^{\rps}_{t_0}\right)\}&> \max\{V_2\bvec{k} - B \left(x^\rps_{t_0  + \tau} - x^{\rps}_{t_0}\right)\}, \label{eq:comp12}\\
    \max\{ V_1\bvec{k} + A_{\rps}\left(x^\rps_{t_0  + \tau} - x^{\rps}_{t_0}\right)\}&> \max\{V_3\bvec{k} + B \left(x^\rps_{t_0  + \tau} - x^{\rps}_{t_0}\right)\}. \label{eq:comp23}
\end{align}
\end{lemma}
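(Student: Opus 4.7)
The plan is to prove the two inequalities separately: (\ref{eq:comp23}) admits a short sign-based argument, while (\ref{eq:comp12}) requires a reparametrization followed by case analysis along the RPS trajectory.

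For (\ref{eq:comp23}), since every entry of $B$ is negative and $\Delta x\ge 0$ entrywise, $B\Delta x$ is entrywise non-positive, hence $\max\{V_3\bvec{k}+B\Delta x\}\le\max\{V_3\bvec{k}\}$. On the other side, the definition $V_1=\one\Delta^\top+A_{\rps}Q_0$ yields $V_1\bvec{k}+A_{\rps}\Delta x=\Delta(k)\one+U_{t_0+\tau}^{\rps}$, and Fact~\ref{fact:monotone} applied to standalone RPS gives $\max U_{t_0+\tau}^{\rps}\ge\max U_{t_0}^{\rps}$, so $\max\{V_1\bvec{k}+A_{\rps}\Delta x\}\ge\max\{V_1\bvec{k}\}$. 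Combined with the already-verified margin $\max\{V_1\bvec{k}\}-\max\{V_3\bvec{k}\}=1/27>0$, this proves (\ref{eq:comp23}).

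For (\ref{eq:comp12}) I would introduce the remainder $r:=Q\bvec{k}-\Delta x\ge 0$, so that $\one^\top r=T_{k+1}-T_k-\tau\in[1,T_{k+1}-T_k]$ whenever $\tau<T_{k+1}-T_k$. The identities $V_1+A_{\rps}Q=V_3C$ and $V_2-BQ=V_1C$ from (\ref{eq:blocks}) rewrite both sides of (\ref{eq:comp12}) as
\[
V_1\bvec{k}+A_{\rps}\Delta x=V_3\bvec{k+1}-A_{\rps}r,\qquad V_2\bvec{k}-B\Delta x=V_1\bvec{k+1}+Br.
\]
At $r=0$ the difference is $\max V_3\bvec{k+1}-\max V_1\bvec{k+1}=-1/27$, so any non-zero $r$ must close this gap through the perturbation terms. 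Since $-B[1,\cdot]=(71,54,75)/900$ has minimum entry $54/900$, one has $(Br)[1]\le -3\one^\top r/50$. The tightest subcase is the most recent RPS action run, $r=s\,e[2]$ for $s\in\{1,\dots,8k+21\}$ (action $2$ of cycle $j=4k+8$): here $A_{\rps}r=s(-1,0,1)^\top$ keeps entry $2$ of $V_3\bvec{k+1}$ unchanged, so $\text{LHS}\ge V_3\bvec{k+1}[2]$ and $\text{RHS}\le V_1\bvec{k+1}[1]-3s/50$, giving $\text{LHS}-\text{RHS}\ge -1/27+3s/50\ge 31/1350>0$ already at $s=1$. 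For $r$'s drawn from earlier action runs, $\one^\top r$ is larger and pushes $(Br)[1]$ more negative; the row gaps $V_1\bvec{k+1}[1]-V_1\bvec{k+1}[2]=12$ and $V_1\bvec{k+1}[1]-V_1\bvec{k+1}[3]=12k$ ensure that alternative argmax rows on the right do not threaten the bound.

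The main obstacle is the bookkeeping over the $O(k)$ distinct action runs traversed within one outer-loop cycle. Each side of (\ref{eq:comp12}) is a max of three linear functions of $r$, hence piecewise linear in $\tau$, so their difference is piecewise linear and attains its extrema at action-switch times plus the times at which the argmax on either side transitions. These breakpoints are enumerable via Fact~\ref{fact:rps}: at each of them $U_{t_0+\tau}^{\rps}$ and $\Delta x$ have closed-form expressions in the current cycle index $j\in\{2k,\dots,4k+8\}$. The proof then reduces to a finite family of explicit scalar inequalities in $k$ involving only the entries of $V_1,V_2,V_3,B$ from Sec~\ref{sec:hard-instance-number}, each of which can be checked directly.
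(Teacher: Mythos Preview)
Your treatment of (\ref{eq:comp23}) is correct and identical to the paper's: monotonicity of $\max U^{\rps}$ gives $\text{LHS}\ge\max\{V_1\bvec{k}\}$, negativity of $B$ gives $\text{RHS}\le\max\{V_3\bvec{k}\}$, and the $1/27$ margin closes the argument.

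For (\ref{eq:comp12}), your reparametrization by the residual $r=Q\bvec{k}-\Delta x$ and comparison to the endpoint $r=0$ is exactly the manoeuvre the paper uses: it too writes both sides relative to $\max\{V_3\bvec{k+1}\}$ and $\max\{V_1\bvec{k+1}\}$ and then controls how each side moves as $\one^\top r$ grows. The difference is in how the remaining range is covered. The paper bounds the RHS uniformly by a single linear function of $\one^\top r$ and pairs this with a three-regime lower bound on the LHS (last action run; far from the endpoint using $\text{LHS}\ge\max\{V_1\bvec{k}\}$; an intermediate regime using the $\Theta(1/k)$ growth rate of $\max U^{\rps}$). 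This handles all $\tau$ without enumerating action runs.

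Your execution does not get that far. You fully treat only the final run $r=se[2]$ and then defer the rest to ``a finite family of explicit scalar inequalities in $k$''. That description is misleading: the RPS trajectory between $t_0$ and $t_1$ traverses $\Theta(k)$ cycles, so there are $\Theta(k)$ action-switch breakpoints, and the resulting inequalities are indexed by both $k$ and the cycle number $j\in\{2k,\dots,4k+8\}$. A uniform-in-$j$ argument is still required, and your outline does not supply one; the paper's three-regime split is precisely such an argument. There is also a small slip in the subcase you do carry out: the bound $\text{RHS}\le V_1\bvec{k+1}[1]-3s/50$ tacitly identifies the RHS with its first coordinate, but entry~1 ceases to be the argmax once $s>10800/33\approx 327$, which occurs for $k\ge 39$. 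The conclusion survives (when entry~2 takes over, $\text{RHS}\le V_1\bvec{k+1}[1]-12<V_3\bvec{k+1}[2]\le\text{LHS}$), but the argument as written needs that patch.
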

\begin{proof}
First off, because
\[
V_1\bvec{k} = (\one \Delta^\top + A_{\rps} Q_0)\bvec{k} = \Delta(k)\one  + A_{\rps} x^\rps_{t_0},
\]
we have
\[
V_1\bvec{k} + A_{\rps} \left(x^\rps_{t_0  + \tau} - x^{\rps}_{t_0}\right) = \Delta(k)\one + A_{\rps} x^\rps_{t_0 + \tau}.
\]
It follows immediately from Fact~\ref{fact:monotone} that
\[
\max\{\Delta(k)\one + A_{\rps} x^\rps_{t_0 + \tau}\} \ge  \max\{\Delta(k)\one + A_{\rps} x^\rps_{t_0}\} = \max\{ V_1\bvec{k}\}.
\]
Therefore (\ref{eq:comp23}) follows as
\[
\max\{V_1\bvec{k}\} > \max\{V_3\bvec{k}\} \ge \max\{V_3\bvec{k} + B \left(x^\rps_{t_0  + \tau} - x^{\rps}_{t_0}\right)\}.
\]
Here the second equality holds as $ \left(x^\rps_{t_0  + \tau} - x^{\rps}_{t_0}\right)[i]\ge 0$ for all $i\in \{1,2,3\}$ while $\max\{B\}<0$.

The proof of (\ref{eq:comp12}) is slightly more involved. Notice that at the end of the interval, {\emph{i.e.}} when $\tau = T_{k+1}-T_k$, the left hand side becomes
\[
\max\{V_1\bvec{k} + A_{\rps}Q\bvec{k}\} = \max\{V_3\bvec{k+1}\} = 2k^2 + \frac{514}{27}k + 17,
\]
while the right hand side becomes
\[
\max\{V_2\bvec{k} - BQ\bvec{k}\} = \max\{ V_1 \bvec{k+1}\} = 2k^2 + \frac{514}{27}k + \frac{460}{27} = \max\{V_1\bvec{k} + A_{\rps}Q\bvec{k}\} + \frac{1}{27}.
\]
We can then compare the value of both sides by estimating the difference for $\tau\in [0, T_{k+1}-T_k)$ and $\tau = T_{k+1}-T_k$. In particular, since $-\max\{A\} = \frac{1}{12} > 0$, the right hand side of  (\ref{eq:comp12}) grows by at least $\frac{1}{12}$ when $\tau$ increases by $1$. In other words, at time $\tau$,
\[
{\rm RHS} \le \max\{V_1\bvec{k} + A_{\rps}Q\bvec{k}\} + \frac{1}{27} - \frac{1}{12}(T_{k+1} - T_k - \tau).
\]
Meanwhile, the left hand side grows with $\tau$ in a $\Theta(1/k)$ rate. Between $\tau \in [T_{k+1}-8k-21, T_{k+1})$,
\begin{align*}
 {\rm LHS} = \max\{V_1\bvec{k} + A_{\rps}Q\bvec{k}\} > \max\{V_1\bvec{k} + A_{\rps}Q\bvec{k}\} + \frac{1}{27} - \frac{1}{12}(T_{k+1} - T_k - \tau) \ge {\rm RHS}.
\end{align*}
When $\tau < T_{k+1} - 12(4k+17) - 1$,
\begin{align*}
{\rm LHS} &\ge \max\{V_1\bvec{k}\} = \max\{V_1\bvec{k} + A_{\rps}Q\bvec{k}\} - (4k+17)\\
&\ge  \max\{V_1\bvec{k} + A_{\rps}Q\bvec{k}\} - \frac{1}{12}(T_{k+1} - T_k - \tau)  + \frac{1}{27} = {\rm RHS}.
\end{align*}
Finally, when $\tau \in [T_{k+1} - 12(4k+17) -1, T_{k+1}-8k-21)$,
\begin{align*}
{\rm LHS} &\ge \max\{V_1\bvec{k}+A_{\rps}Q\bvec{k}\} - 1 - \frac{1}{24k}\left(T_{k+1}-\tau-1\right) > \\
&\ge \max\{V_1\bvec{k}+A_{\rps}Q\bvec{k}\} - \frac{1}{12}\left(T_{k+1}-\tau\right) + \frac{1}{27} \ge {\rm RHS}.
\end{align*}
Putting the three cases together proves (\ref{eq:comp12}).
\end{proof}

\begin{lemma}
\label{lem:oneblock}
Assume that for some $k$, when $t=T_k$,
\[
U_t =  \begin{pmatrix}
    V_1\bvec{k}\\
    V_2\bvec{k}\\
    V_3\bvec{k}\\
\end{pmatrix},
\]
then $\tau\in [0, T_{k+1} - T_{k}]$,
\[
x_{T_k+\tau}-x_{T_k} = \begin{pmatrix}
x^\rps_{t_0+\tau} - x^\rps_{t_0}    \\
0\\
0\\
\end{pmatrix}
\]
\end{lemma}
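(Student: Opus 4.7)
The plan is induction on $\tau$. The base case $\tau = 0$ is immediate, since $x^\rps_{t_0} - x^\rps_{t_0} = 0$. For the inductive step, assume the identity holds up to some $\tau \in [0, T_{k+1}-T_k)$; the goal is to show that the next action $i_{T_k+\tau+1}$ both lies in the first block and equals the RPS action $i^\rps_{t_0+\tau+1}$, which together with the simplified update (\ref{eq:fp-sym}) extends the identity to $\tau+1$.

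To identify the argmax, I evaluate $U_{T_k+\tau}$ blockwise. Because the induction hypothesis says that $x_{T_k+\tau}-x_{T_k}$ is supported on indices $1,2,3$, the relation $U_{T_k+\tau} = U_{T_k} + M(x_{T_k+\tau}-x_{T_k})$ uses only the first block-column of $M$, giving
\begin{align*}
U_{T_k+\tau}[1:3] &= V_1\bvec{k} + A_\rps(x^\rps_{t_0+\tau}-x^\rps_{t_0}),\\
U_{T_k+\tau}[4:6] &= V_2\bvec{k} - B(x^\rps_{t_0+\tau}-x^\rps_{t_0}),\\
U_{T_k+\tau}[7:9] &= V_3\bvec{k} + B(x^\rps_{t_0+\tau}-x^\rps_{t_0}).
\end{align*}
These are exactly the three quantities compared in Lemma~\ref{lem:block}, which therefore shows that $\max\{U_{T_k+\tau}[1:3]\}$ strictly exceeds both $\max\{U_{T_k+\tau}[4:6]\}$ and $\max\{U_{T_k+\tau}[7:9]\}$, so the argmax must lie in $\{1,2,3\}$, regardless of tie-breaking across blocks.

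Within block~$1$, the identity $V_1\bvec{k} = \Delta(k)\one + A_\rps x^\rps_{t_0}$ -- which was already used in the proof of Lemma~\ref{lem:block} -- yields $U_{T_k+\tau}[1:3] = \Delta(k)\one + U^\rps_{t_0+\tau}$. Adding a scalar constant to each coordinate preserves both the argmax set and the lexicographic ordering among tied indices, so FP on $M$ picks the same index as RPS does at step $t_0+\tau+1$, namely $i^\rps_{t_0+\tau+1}\in\{1,2,3\}$. The $x$-update by $e[i_{T_k+\tau+1}]$ in (\ref{eq:fp-sym}) therefore extends the desired identity to $\tau+1$, closing the induction. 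The substantive obstacle -- bounding the block-maxima against one another -- is entirely absorbed by Lemma~\ref{lem:block}; what remains here is careful bookkeeping of the block decomposition and the observation that lexicographic tie-breaking inside block~$1$ agrees with that in the embedded RPS instance.
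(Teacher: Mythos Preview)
Your proof is correct and follows the same inductive scheme as the paper's: base case $\tau=0$, then use Lemma~\ref{lem:block} to force the argmax into the first block, and conclude $i_{T_k+\tau+1}=i^\rps_{t_0+\tau+1}$. Your write-up is in fact more explicit than the paper's on two points it leaves implicit: the blockwise expression for $U_{T_k+\tau}$, and the constant-shift argument $U_{T_k+\tau}[1{:}3]=\Delta(k)\one+U^\rps_{t_0+\tau}$ that justifies why the within-block argmax (under lexicographic tie-breaking) coincides with the RPS choice.
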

\begin{proof}
We prove this lemma via induction on $\tau$. The base case ($\tau=0$) obviously holds. By Lemma~\ref{lem:block}, if $\tau<T_{k+1}-T_k$ and
\[
x_{T_k+\tau}-x_{T_k} = \begin{pmatrix}
x^\rps_{t_0+\tau} - x^\rps_{t_0}    \\
0\\
0\\
\end{pmatrix},
\]
then the maximum of $U_{T_k+\tau}$ must be in the first block. Therefore
\[
i_{T_k+\tau+1} = i^{\rps}_{t_0+\tau+1},
\]
which implies that
\[
x_{T_k+\tau+1}-x_{T_k} = \begin{pmatrix}
x^\rps_{t_0+\tau+1} - x^\rps_{t_0}    \\
0\\
0\\
\end{pmatrix}.
\]
Induction on $\tau$ from $0$ to $T_{k+1}-T_k-1$ proves the lemma.
\end{proof}

\begin{theorem}
When FP on $M$ is initialized with $U_0$, for any $k\ge 1$, when $t=T_{3k+1}=\Theta(k^3)$,
\[
U_{t} = \begin{pmatrix}
V_1 \bvec{3k+1}\\
V_2 \bvec{3k+1}\\
V_3 \bvec{3k+1}\\
\end{pmatrix},
\]
Thus $\gap(x_t/t, x_t/t) = 2\max\{U_t\}/t = \Theta(t^{-1/3})$.
\end{theorem}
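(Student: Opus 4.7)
The plan is to induct on the phase boundary index $\ell\ge 1$, showing that $U_{T_\ell}$ follows a three-periodic cyclic $V$-pattern; the theorem is then the special case $\ell=3k+1$. The base case $\ell=1$ is immediate, since $T_1=0$ and substituting $\bvec{1}=[1,1,1]^\top$ into the explicit $V_1,V_2,V_3$ matrices in Sec.~\ref{sec:hard-instance-number} recovers the initialization $U_0$.

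For the inductive step when the first block is active, assume $U_{T_\ell}=[V_1\bvec{\ell};V_2\bvec{\ell};V_3\bvec{\ell}]$. Lemma~\ref{lem:oneblock} implies that every action in $[T_\ell,T_{\ell+1})$ falls in block~1 and matches the RPS trajectory between times $t_0$ and $t_1$, so $x_{T_{\ell+1}}-x_{T_\ell}=[Q\bvec{\ell};0;0]$. Multiplying by $M$, combining with the block identities~(\ref{eq:blocks}), and using $C\bvec{\ell}=\bvec{\ell+1}$ yields $U_{T_{\ell+1}}=[V_3\bvec{\ell+1};V_1\bvec{\ell+1};V_2\bvec{\ell+1}]$---the $V$-pattern has cyclically rotated by one block. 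The cyclic symmetry of $M$ under the block-shift permutation then delivers the analogues of Lemmas~\ref{lem:block} and~\ref{lem:oneblock} for the two rotated configurations, so phase $\ell+1$ runs entirely in block~2 producing $[V_2;V_3;V_1]\bvec{\ell+2}$, and phase $\ell+2$ runs entirely in block~3 producing $[V_1;V_2;V_3]\bvec{\ell+3}$. Three consecutive phases thus bring the pattern back to $[V_1;V_2;V_3]$ while incrementing $\ell$ by $3$, so by induction $U_{T_{3k+1}}=[V_1\bvec{3k+1};V_2\bvec{3k+1};V_3\bvec{3k+1}]$ for every $k\ge 0$. The rate claim then follows: the leading entry of each row of $V_1$ is $54/27=2$, so $\max\{V_1\bvec{3k+1}\}=2(3k+1)^2+O(k)=\Theta(k^2)$, while $T_{3k+1}=\Theta(k^3)$, giving $\gap(x_t/t,x_t/t)=2\max\{U_t\}/t=\Theta(t^{-1/3})$.

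The main delicate step is the transfer of Lemmas~\ref{lem:block} and~\ref{lem:oneblock} to the rotated states. Lexicographic tie-breaking is not globally invariant under the block shift, but Lemma~\ref{lem:block} supplies strict cross-block inequalities, so ties never occur across blocks; within each block, lex prefers the block's first coordinate, which is preserved by the cyclic shift. The cyclic analogue of the key inequality~(\ref{eq:key-ineq}) is in fact the same inequality: regardless of which block is active, the active block always carries $V_1\bvec{\ell}$, the lagging block carries $V_3\bvec{\ell}$, and the catching-up block carries $V_2\bvec{\ell}$, so all three cross-block comparisons reduce to comparing the same pair of $V_i$'s.
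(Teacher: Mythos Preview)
Your proof is correct and follows essentially the same route as the paper: induct on the phase index, use Lemma~\ref{lem:oneblock} to advance one phase via the block identities~(\ref{eq:blocks}), and invoke the cyclic symmetry of $M$ to handle the two rotated configurations, returning to the $[V_1;V_2;V_3]$ pattern after three phases. Your final paragraph on why lexicographic tie-breaking does not obstruct the cyclic-symmetry transfer is a point the paper simply asserts; your explicit observation that cross-block ties are ruled out by the strict inequalities of Lemma~\ref{lem:block}, while within-block lex order is block-shift invariant, is a welcome clarification.
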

\begin{proof}
By setting $\tau = T_{k+1}-T_k$, Lemma~\ref{lem:oneblock} implies that if 
\[
U_{T_k} =  \begin{pmatrix}
    V_1\bvec{k}\\
    V_2\bvec{k}\\
    V_3\bvec{k}\\
\end{pmatrix},
\]
then 
\[
U_{T_{k+1}} = \begin{pmatrix}
    V_1\bvec{k}\\
    V_2\bvec{k}\\
    V_3\bvec{k}\\
\end{pmatrix} + \begin{pmatrix}
    A_{\rps} & B & -B\\
    -B & A_{\rps} & B\\
    B & -B & A_{\rps}
\end{pmatrix} \begin{pmatrix}
Q\bvec{k}    \\
0\\
0\\
\end{pmatrix} =  \begin{pmatrix}
    V_3\bvec{k+1}\\
    V_1\bvec{k+1}\\
    V_2\bvec{k+1}\\
\end{pmatrix}.
\]
Because of the cyclic symmetry of $M$, the same proof of Lemma~\ref{lem:oneblock} implies that
\[
U_{T_{k+2}} =  \begin{pmatrix}
    V_2\bvec{k+2}\\
    V_3\bvec{k+2}\\
    V_1\bvec{k+2}\\
\end{pmatrix}, \quad U_{T_{k+3}} = \begin{pmatrix}
    V_1\bvec{k+3}\\
    V_2\bvec{k+3}\\
    V_3\bvec{k+3}\\
\end{pmatrix}.
\]
For the case of $k=0$, notice that
\[
U_0 = U_{T_1} = \begin{pmatrix}
    V_1\bvec{1}\\
    V_2\bvec{1}\\
    V_3\bvec{1}\\
\end{pmatrix}.
\]
Induction on $k$ immediately implies that when $t=3k+1$,
\[
U_{t} = \begin{pmatrix}
    V_1\bvec{3k+1}\\
    V_2\bvec{3k+1}\\
    V_3\bvec{3k+1}\\
\end{pmatrix}.
\]
Finally, notice that when $t=T_{3k+1} = \Theta(k^3)$,
\[
 \max\{U_t\} = \begin{pmatrix}
    2 & \frac{406}{27} & 0
\end{pmatrix}^\top\bvec{3k+1} \ge 18k^2 = \Omega(t^{\frac{2}{3}}).
\]
Therefore
\[
\gap(x_t/t) = 2\max\{U_t\}/t = \Theta(t^{-\frac{1}{3}}).
\]

\end{proof}

\subsection{Numerical experiment}
The fact that a gap exists between the first and second largest entry except for the second step also means that the simulation of (\ref{eq:fp-sym}) can be easily carried out with floating point calculation. As shown in Fig.~\ref{fig:numerical}, the duality gap of $x_t/t$ indeed decreases at a rate of $\approx 0.36 t^{-\frac{1}{3}}$.

\begin{figure}[t]
    \centering
    \includegraphics[width=0.5\linewidth]{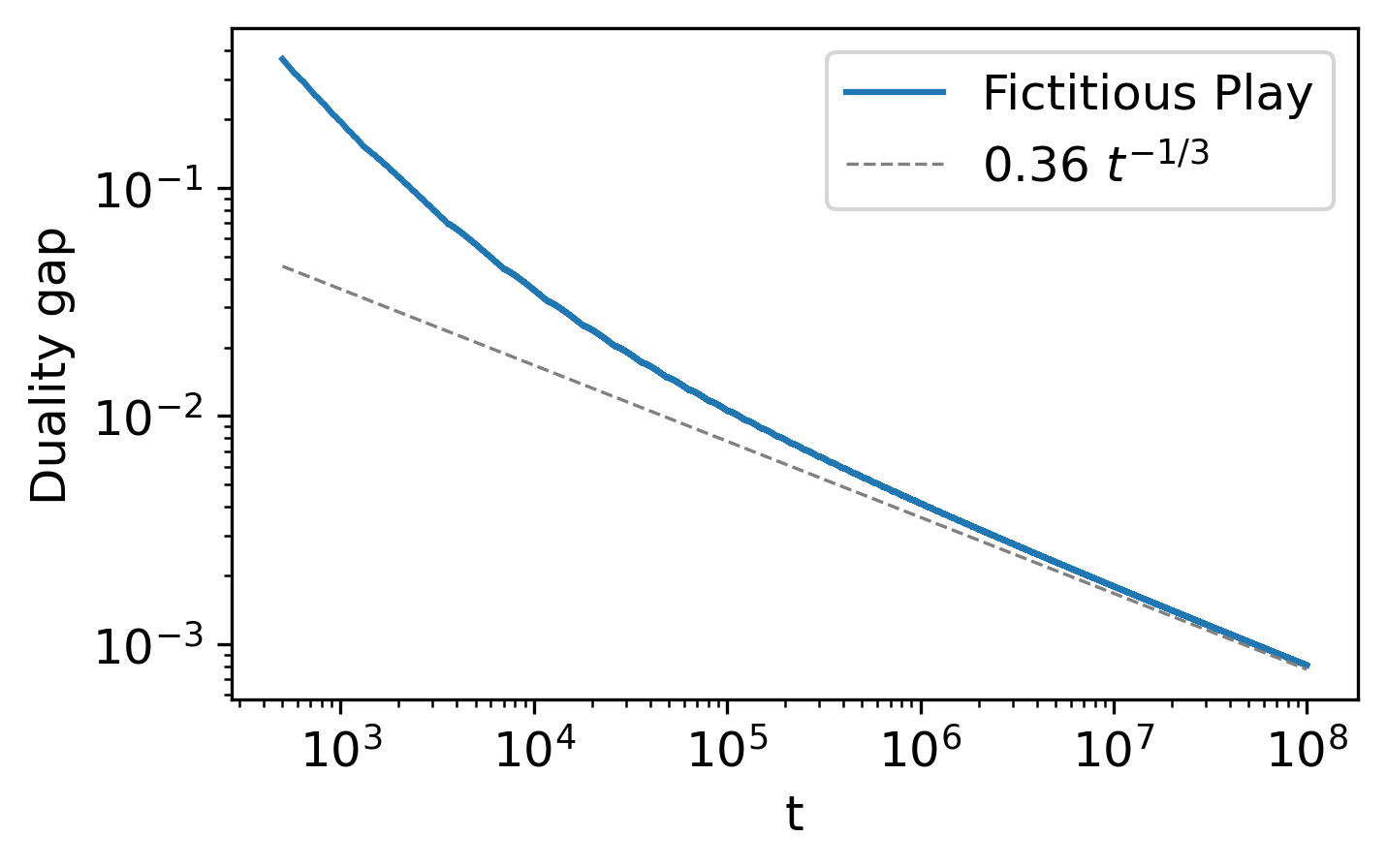}
    \caption{Duality gap of $x_t/t$ vs $t$ for FP in $M_{aug}$}
    \label{fig:numerical}
\end{figure}

\subsection*{Acknowledgements}
The author thanks Qiwen Cui for introducing him to the problem. The author thanks Chi Jin, Qinghua Liu and Zhou Lu for helpful discussions.

\bibliographystyle{plainnat}
\bibliography{ref}

\newpage
\appendix
\section{A Full Presentation of $M_{aug}$}
Choosing $\delta=\frac{1}{2700}$, the hard example is given by
\begin{equation*}
M_{aug}=\begin{pmatrix}
0 & -\frac{215689}{2700} & -\frac{15274}{225} & -\frac{215687}{2700} & -\frac{1}{1350} & -\frac{4087}{108} & -\frac{11329}{900} & -\frac{52063}{900} & -\frac{53897}{675} & -\frac{137287}{2700} \\
\frac{215689}{2700} & 0 & -1 & 1 & -\frac{71}{900} & -\frac{3}{50} & -\frac{1}{12} & \frac{71}{900} & \frac{3}{50} & \frac{1}{12} \\
\frac{15274}{225} & 1 & 0 & -1 & -\frac{3}{50} & -\frac{7}{300} & -\frac{1}{36} & \frac{3}{50} & \frac{7}{300} & \frac{1}{36} \\
\frac{215687}{2700} & -1 & 1 & 0 & -\frac{1}{12} & -\frac{1}{36} & -\frac{1}{18} & \frac{1}{12} & \frac{1}{36} & \frac{1}{18} \\
\frac{1}{1350} & \frac{71}{900} & \frac{3}{50} & \frac{1}{12} & 0 & -1 & 1 & -\frac{71}{900} & -\frac{3}{50} & -\frac{1}{12} \\
\frac{4087}{108} & \frac{3}{50} & \frac{7}{300} & \frac{1}{36} & 1 & 0 & -1 & -\frac{3}{50} & -\frac{7}{300} & -\frac{1}{36} \\
\frac{11329}{900} & \frac{1}{12} & \frac{1}{36} & \frac{1}{18} & -1 & 1 & 0 & -\frac{1}{12} & -\frac{1}{36} & -\frac{1}{18} \\
\frac{52063}{900} & -\frac{71}{900} & -\frac{3}{50} & -\frac{1}{12} & \frac{71}{900} & \frac{3}{50} & \frac{1}{12} & 0 & -1 & 1 \\
\frac{53897}{675} & -\frac{3}{50} & -\frac{7}{300} & -\frac{1}{36} & \frac{3}{50} & \frac{7}{300} & \frac{1}{36} & 1 & 0 & -1 \\
\frac{137287}{2700} & -\frac{1}{12} & -\frac{1}{36} & -\frac{1}{18} & \frac{1}{12} & \frac{1}{36} & \frac{1}{18} & -1 & 1 & 0
\end{pmatrix}.
\end{equation*}

\section{Omitted Proofs}
\subsection{Fact~\ref{fact:rps}}
   We can prove this inductively, where the hypothesis is that when $t=9k^2$, 
\[
x_t = [3k^2-2k, 3k^2, 3k^2 + 2k]^\top, U_t = [2k, -4k, 2k]^\top.
\]
By alphabetical tie-breaking, $i_{t+1}$ = $1$. Then
\[
U_{9k^2+1} = U_{9k^2} + A_{\rps}[:, 1] = [2k, -4k+1, 2k-1]^\top.
\]
As long as $i_t=1$, $U_{t}[1] = U_{t-1}[1]$, while $U_t[3] = U_{t-1}[3]-1$. $i_t$ only stops being the argmax when $U_{t-1}[2]>U_[t-1][1]$. Therefore, $i_t=1$ for $9k^2+1 \le t\le 9k^2 + 6k + 1$. When $t=9k^2+6k+1$, one has
\[
x_t = [3k^2+4k+1, 3k^2, 3k^2+2k]^\top, U_t = [2k, 2k+1, -4k-1].
\]
Similarly, $i_t=2$ for $9k^2+6k+2 \le t\le 9k^2+12k+4$. During this period, $U[2]$ stays constant, $U[1]$ decreases while $U[3]$ increases.  When $t=9k^2+12k+6$, one has
\[
x_t = [3k^2+4k+1, 3k^2+6k+3, 3k^2+2k]^\top, U_t = [-4k-3, 2k+1, 2k+2].
\]
Finally, when $9k^2+12k+5\le t \le 9k^2+18k+9$, $U[3]$ stays constant, $U[1]$ increases while $U[2]$ decreases. Thus, in this interval $i_t=3$. When $t=9k^2+18k+9=9(k+1)^2$,
\[
x_t = [3k^2+4k+1, 3k^2+6k+3, 3k^2+8k+5]^\top, U_t = [2k+2, -4k-4, 2k+2].
\]
The periodicity is then established via induction.

\subsection{Lemma~\ref{lem:init}}
It is assumed without loss of generality that $\hat{i}_1=1$. Therefore, $\hat{U}_1=[0;\hat{U}_0]$. Since
\[
\max\{\hat{U}_0\} > \min\{\hat{U}_0\} > 0, 
\]
1. when an action in $[2,10]$ is played, $\hat{U}[1]$ decreases; 2. $\max \{\hat{U}_1\}[2:10]>0=\hat{U_1}[1]$. Therefore, starting from $t=1$, 
\[
{\hat{U}_t[1]} < 0 < \max_t\{\hat{U}\},
\]
which means that action $1$ will never be played again. Therefore, the action sequence $\hat{i_{t+1}}$ will be identical to $i_t$ (action sequence of game $M$ initialized with $\hat{U}_0$).

Next, $\hat{U}_0$ is different from $U_0$ in only two ways: 1. an additive constant which makes no difference to FP; 2. a small additive constant $\delta$. In $M$, ties only occur within block, and the lower bound assumption is that the ties are broken lexicographically. By adding a small constant $\delta$, no ties will occur in $M$, and the within-block trajectory is identical to RPS with lexicographic tie-breaking. Since $\delta < 1/1800$ while the largest denominator in $M$ is $900$, $2\delta$ will be smaller than existing utility gaps.

\end{document}